\pgfplotsset{every tick label/.append style={font=\footnotesize}}
\pgfplotsset{compat=1.18}
\newcolumntype{K}[1]{>{\centering\arraybackslash$}p{#1}<{$}}
\newcolumntype{R}{>{\raggedleft\arraybackslash}X}
\newcolumntype{L}{>{\raggedright\arraybackslash}X}
\newcolumntype{C}{>{\centering\arraybackslash}X}
\newcolumntype{A}{>{\columncolor{gray!25}}C}
\newcolumntype{a}{>{\columncolor{gray!25}}c}
\newlength{\tablen}
\newcolumntype{.}{D{.}{.}{-1}}
\renewcommand\p@subfigure{\arabic{figure}.}
\renewcommand\p@subtable{\arabic{table}.}
\setlist[itemize]{leftmargin=2.5\parindent}
\setlist[enumerate]{leftmargin=2.5\parindent}
\def\addlegendimage{\csname pgfplots@addlegendimage\endcsname}
\theoremstyle{plain}
\newtheorem{proposition}{Proposition}%[section]
\theoremstyle{definition}
\newtheorem{definition}{Definition}%[section]
\theoremstyle{remark}
\let\@fnsymbol\@alph
\def\keywords{\vspace{.5em} % Add keywords
{\noindent \textit{Keywords}: }}
\def\AMS{\vspace{.5em} % Add keywords
{\noindent \textbf{\emph{MSC} class}: }}
\def\JEL{\vspace{.5em} % Add keywords
{\noindent \textbf{\emph{JEL} classification number}: }}
\title{Misaligned incentives in sports: A mathematical analysis of the post-2024 UEFA Champions League qualification}
\author{\href{https://sites.google.com/view/laszlocsato}{L\'aszl\'o Csat\'o}\thanks{~Corresponding author. Email: \emph{laszlo.csato@sztaki.hun-ren.hu} \newline
Institute for Computer Science and Control (SZTAKI), Hungarian Research Network (HUN-REN), Laboratory on Engineering and Management Intelligence, Research Group of Operations Research and Decision Systems, Budapest, Hungary \newline
Corvinus University of Budapest (BCE), Institute of Operations and Decision Sciences, Department of Operations Research and Actuarial Sciences, Budapest, Hungary}
$\qquad \qquad$
Sergey Ilyin\thanks{~Email: \emph{ilyin-sergey@yandex.com} \newline
Independent researcher, T\'orshavn, Faroe Islands}
}
\date{\today}
\def\Dedication{
{\noindent
\emph{Hoc est, quod ait Heraclitus: ``In idem flumen bis descendimus et non descendimus.''}\footnote{~Source: \url{https://www.perseus.tufts.edu/hopper/text?doc=Perseus\%3Atext\%3A2007.01.0080\%3Aletter\%3D58}}
}
\flushright
\noindent (Seneca: \emph{Epistulae Morales ad Lucilium LVIII})
\vspace{0.5cm} 
\justify }
\begin{document}

%\newgeometry{top=15mm,bottom=20mm,left=25mm,right=25mm}

\maketitle
\thispagestyle{empty}
\Dedication

\begin{abstract}
\noindent
UEFA declares that it is committed to respecting the fundamental values of sports. However, the qualification rules of the post-2024 UEFA Champions League are shown to be unfair: a game with misaligned incentives was narrowly avoided in the 2023/24 German Bundesliga. We develop a mathematical model to reveal how incentives for losing can be reduced or eliminated. Since UEFA repeatedly commits the same theoretical mistake in designing the qualification system of its competitions, governing bodies in sports are called to work more closely together with the scientific community.

\keywords{fairness; incentives; integrity; tournament design; UEFA Champions League}

\AMS{62F07, 90B90, 91B14}
% Ranking and selection
% Case-oriented studies in operations research
% Social choice

\JEL{C44, D71, Z20}
% Operations Research, Statistical Decision Theory
% Social Choice, Clubs, Committees, Associations
% Sports Economics, General
\end{abstract}

\clearpage
\restoregeometry

\section{Introduction} \label{Sec1}

Ensuring the fairness and integrity of sports is a crucial responsibility of tournament organisers \citep{ForrestMcHale2019, KendallLenten2017, DevriesereCsatoGoossens2024}. One of the most important criteria of fairness is incentive compatibility: it should never be allowed to happen that one player or team seeks to win and its opponent to lose. This would be clearly against the spirit of the game and may easily lead to collusion or match-fixing \citep{Csato2024f, Haigh2009}.

Naturally, the Union of European Football Associations (UEFA) is also committed to these values, at least in declarations and media releases. For example, UEFA president \emph{Aleksander {\v C}eferin} has commented on the novel format of the UEFA Champions League in 2024 with the following words \citep{UEFA2024c}:
``\emph{UEFA has clearly shown that we are fully committed to respecting the fundamental values of sport and to defending the key principle of open competitions, with qualification based on sporting merit, fully in line with the values and solidarity-based European sports model.}''

However, is the UEFA Champions League qualification based on sporting merit under all circumstances? Is there any issue in its rules that threatens the fairness of the tournament and can potentially jeopardise competition?

The current paper uncovers that the assignment of the European performance spots, introduced from the 2024/25 season of the UEFA Champions League, may create a misaligned incentive to lose. This was narrowly avoided in the last round of the 2023/24 German Bundesliga, which determined qualification for the next season of the UEFA Champions League. To analyse the issue in depth, we develop a mathematical model to propose two solutions that can improve fairness.

The novel case of incentive incompatibility is especially worrying because ``\emph{the objective for a sport organisation should be to eliminate any game that raises potential unfairness ex-ante}'' \citep[p.~535]{ScellesFrancoisValenti2024}.
Furthermore, the same theoretical problem led to some instances of incentive incompatibility in the past, even though several academic papers have been published on this issue (see Section~\ref{Sec2} and Table~\ref{Table2}). In order to avoid these paradoxes, governing bodies in major sports, especially UEFA, are called to strengthen collaboration with the scientific community.
%Furthermore, the root of the problem resides in the allocation of vacant slots, which led to some instances of incentive incompatibility in the past. It seems that filling vacancies poses a significant challenge for tournament organisers, even though several academic papers have been published on this issue (see Section~\ref{Sec2} and Table~\ref{Table2}). Therefore, governing bodies in major sports, especially UEFA, are called to strengthen collaboration with the scientific community.

%The paper is structured as follows.
%Section~\ref{Sec2} gives a concise summary of literature, focusing on the works that uncovered a situation where a team could have been strictly better off by losing. A motivating example is presented in Section~\ref{Sec3}. Section~\ref{Sec4} discusses the root of the problem and offers potential solutions, while Section~\ref{Sec5} concludes.

\section{Related literature} \label{Sec2}

Tanking---the act of deliberately losing a game---is against the spirit of sports and is a well-known violation of fairness in sports \citep{KendallLenten2017, Csato2021a}. Tanking is closely related to the axiom of incentive compatibility, which requires that tournament rules should encourage honest performance under any circumstances; otherwise, a team could face misaligned or perverse incentives that might inspire tanking.

Probably the most serious type of incentive \emph{in}compatibility is when a team can be strictly better off by losing. According to our knowledge, such a possibility has been identified first by \citet{DagaevSonin2018} in the UEFA Europa League. Several qualification tournaments for the FIFA World Cups and the UEFA European Championships have been verified to suffer from the same weakness \citep{Csato2020c}, even allowing for a match where the dominant strategy of both opponents is to play a draw \citep{Csato2020d}.

In the past, UEFA club competitions violated strategy-proofness mainly due to the unfair allocation of vacant slots. Vacant slots occur if teams can qualify for a tournament in more than one way, thus, the organiser needs to specify what happens if two conditions of qualification are simultaneously satisfied. A straightforward example is the UEFA Europa League, where teams can enter via both their domestic championship and domestic cup. According to \citet{DagaevSonin2018}, these tournament systems consisting of multiple round-robin and knockout tournaments are incentive incompatible if
(1) there are at least two round-robin tournaments with at least one prize;
(2) there is one round-robin tournament and at least one knockout tournament with at least one prize, and the vacant slots are not always filled through the round-robin tournament.

This theoretical result might not seem especially interesting for sports policy. However, the misaligned allocation rule has been responsible for the incentive incompatibility of
\begin{itemize}
\item
the UEFA Europa League entry until the 2014/15 season \citep{DagaevSonin2018};
\item
the UEFA Champions League entry between the 2015/16 and 2017/18 seasons \citep{Csato2019d};
\item
the UEFA Champions League seeding from the 2015/16 season \citep{Csato2020a}.
\end{itemize}
Fortunately, the third problem has been solved by the new seeding rules effective from the 2024/25 season, see Section~\ref{Sec5}. On the other hand, the reform has also created potentially misaligned incentives in the UEFA Champions League qualification as will be shown in Section~\ref{Sec3}.

The introduction of the UEFA Nations League has also allowed for matches where a team has an incentive to lose \citep{Csato2022a, HaugenKrumer2021}. Even though \citet{ScellesFrancoisValenti2024} identify no such games between 2018 and 2021, the authors call for more research around optimal competition formats that provide appropriate incentives for \emph{all} teams in \emph{all} games. Analogously, a recent overview of the causes and consequences of fraud in sport \citep{VanwerschWillemConstandtHardyns2022} concludes that future research is necessary to help the competent authorities in tackling fraud and to gather more knowledge on actionable elements with regard to its prevention.
Our paper can be considered as one of these follow-up studies.

\section{A real-world example} \label{Sec3}

The UEFA Champions League is the most prestigious European club football tournament. The number of participating teams for each national association is determined by the access list, which ranks the associations based on their UEFA association coefficients \citep{Csato2022b}.
The clubs can enter the main league phase of the UEFA Champions League in several ways:
\begin{itemize}
\item
The highest-ranked associations have a given number of slots in the league phase. For instance, the four highest-ranked associations---including Germany---are currently represented by four clubs.
\item
Teams from lower-ranked associations play in the UEFA Champions League qualification. The qualifying consists of two independent knockout competitions; the champions path for the champions of the lower-ranked associations and the league path for the teams of some higher-ranked associations in order to guarantee that a given number of champions play in the league phase.
\item
Both the UEFA Champions League and UEFA Europa League (the second most prestigious European club football competition) titleholders automatically qualify for the league phase, even if they do not qualify for the league phase through their domestic championship.
\item
The two associations with the most coefficient points in the previous season have one European performance spot each in the league phase.
\end{itemize}

However, if a titleholder qualifies for the league phase via its domestic championship, then a vacancy is created in the league phase, which is filled according to \citet[Article~3.04]{UEFA2024b}:

\begin{enumerate}[label=\alph*.]
\item
\emph{A vacancy created by the UEFA Champions League titleholder is filled by the domestic champion with the highest individual club coefficient of all the clubs that qualify for the champions path.}
\item
\emph{A vacancy created by the UEFA Europa League titleholder is filled by the club with the best individual club coefficient of all the clubs that qualify for the champions path and the league path, provided that this club is the highest ranked domestically of those from its association that have not already qualified for the league phase of the competition directly. If this condition of being the highest ranked domestic club is not met by the club with the highest individual club coefficient, then the position moves to the club with the next highest individual club coefficient of all clubs in the champions path and league path.}
\end{enumerate}

The so-called European performance spot has been introduced from the 2024/25 season \citep[Article~3.07]{UEFA2024b}: \\
\emph{Finally, the two associations whose affiliated clubs achieved the best collective performance in the previous season of the UEFA men's club competitions (i.e.\ best season club association coefficient), in accordance with the season coefficient principles (see Annex D.3), are each entitled to an additional place in the league phase, known as a European performance spot (EPS). The EPS is allocated to the club that finishes the relevant association's domestic championship in the highest position of all those that do not qualify for the league phase of the UEFA Champions League via the domestic championship (after any vacancies have been filled in accordance with Articles 3.04 to 3.06). The two associations' overall quota of clubs qualifying to UEFA club competitions are each increased by one club.}

Since the clubs qualify for the UEFA Champions League based on the results of the previous season, the allocation of the European performance spot could have lead to misaligned incentives in the 2023/24 season of domestic leagues.

\begin{table}[t!]
\caption{(Hypothetical) standing of the 2023/24 \\ German Bundesliga before the last matchday}
\label{Table1}
\begin{threeparttable}
\rowcolors{3}{}{gray!20}
    \begin{tabularx}{\linewidth}{Cl CCC CCC >{\bfseries}C} \toprule \hiderowcolors
    Pos   & Team  & W     & D     & L     & GF    & GA    & GD    & Pts \\ \bottomrule \showrowcolors
    1     & Bayer Leverkusen & 27    & 6     & 0     & 87    & 23    & $+$64    & 87 \\
    2     & Bayern M\"unchen & 23    & 3     & 7     & 92    & 41    & $+$51    & 72 \\
    3     & VfB Stuttgart & 22    & 4     & 7     & 74    & 39    & $+$35    & 70 \\
    4     & RB Leipzig & 19    & 7     & 7     & 75    & 37    & $+$38    & 64 \\ \hline
    5     & Borussia Dortmund & 18    & 9     & 6     & 66    & 43    & $+$23    & 63 \\ \hline
    6     & Eintracht Frankfurt & 11    & 13    & 9     & 49    & 48    & $+$1     & 46 \\ \hline
    7     & SC Freiburg & 11    & 9     & 13    & 44    & 56    & $-$12   & 42 \\
    8     & 1899 Hoffenheim & 11    & 7     & 15    & 62    & 66    & $-$4    & 40 \\ \bottomrule    
    \end{tabularx}    
    \begin{tablenotes} \footnotesize
\item
The result of the match Borussia Dortmund vs.\ 1899 Hoffenheim has been changed from 2-3 to 4-3.
\item
The first four teams automatically qualify for the UEFA Champions League due to the access list.
\item
The fifth team also qualifies due to the European performance spot. However, if Dortmund is ranked fifth and wins the Champions League final, this slot will be obtained by the sixth-ranked team.
\item
Pos = Position; W = Won; D = Drawn; L = Lost; GF = Goals for; GA = Goals against; GD = Goal difference; Pts = Points. All teams have played 33 matches. 
    \end{tablenotes}
\end{threeparttable}
\end{table}

Table~\ref{Table1} shows the standing of the 2023/24 German Bundesliga on the morning of 18 May 2024 (the last matchday) with a slight modification. In particular, the result of the match Borussia Dortmund vs.\ 1899 Hoffenheim, played on 15 February 2024, is changed from 2-3 to 4-3. At that moment, it was already known that Germany and Italy finished in the top two based on association club coefficients in the 2023/24 season, and earned the two European performance spots in the 2024/25 Champions League. The two clubs playing in the final of the UEFA Champions League on 1 June 2024 were known to be Borussia Dortmund and the Spanish club Real Madrid, too.

Consider the situation of Eintracht Frankfurt, which plays against RB Leipzig at home on the last (34th) matchday. It could be ranked neither higher nor lower than its current sixth position.
Since the top four teams automatically qualify for the UEFA Champions League, the uncertainty with respect to this tournament resides in whether Borussia Dortmund or RB Leipzig obtains the fourth position, and whether Borussia Dortmund wins the 2023/24 UEFA Champions League or not.

The fate of Eintracht Frankfurt is determined as follows:
\begin{itemize}
\item
If Borussia Dortmund is the fourth and
\begin{itemize}[label=$\diamond$]
\item
Dortmund wins the 2023/24 UEFA Champions League final, then a vacancy is created in the league phase, which is filled by the Ukrainian champion Shakhtar Donetsk according to \citet[Article~3.04]{UEFA2024b};
\item
Dortmund loses the final, then a vacancy may be created by the UEFA Champions League titleholder Real Madrid, which would be filled by the Ukrainian champion Shakhtar Donetsk according to \citet[Article~3.04]{UEFA2024b}.
\end{itemize}
Consequently, the European performance spot obtained by the German Bundesliga is given to the fifth-ranked team RB Leipzig according to \citet[Article~3.07]{UEFA2024b}. Thus, the first five teams qualify for the UEFA Champions League independently of which team wins the 2023/24 UEFA Champions League final.
\item
If Borussia Dortmund is the fifth and
\begin{itemize}[label=$\diamond$]
\item
Dortmund wins the 2023/24 UEFA Champions League final, then it qualifies for the next season of this tournament as the titleholder, and the European performance spot is given to the sixth-ranked Eintracht Frankfurt according to \citet[Article~3.07]{UEFA2024b};
\item
Dortmund loses the final, then Dortmund obtains the European performance spot of the German Bundesliga according to \citet[Article~3.07]{UEFA2024b}, while a vacancy may be created by the UEFA Champions League titleholder Real Madrid, which would be filled by the Ukrainian champion Shakhtar Donetsk according to \citet[Article~3.04]{UEFA2024b}.
\end{itemize}
\end{itemize}
In the first case when Dortmund is ranked fourth, Eintracht Frankfurt could never play in the UEFA Champions League. However, in the second case when Dortmund is ranked fifth, Eintracht Frankfurt can enter the Champions League with a non-negligible probability (if Dortmund wins the 2023/24 UEFA Champions League).
Hence, the dominant strategy of Eintracht Frankfurt is to guarantee that Borussia Dortmund is ranked fifth in the 2023/24 German Bundesliga, which can be easily achieved by losing against RB Leipzig on 18 May 2024.

\begin{figure}[t]
\centering

\begin{tikzpicture}[scale=1,auto=center, transform shape, >=triangle 45]
\tikzstyle{every node}=[draw,align=center];
  \node (C1) at (0,6) {Is the team ranked between 1st and 4th?};
  \node (C2) at (3,3) {Does \emph{Borussia Dortmund} \\ win the 2023/24 UEFA \\ Champions League final?};
  \node (C3) at (-3,0) {Is the 5th-ranked team \\ \emph{Borussia Dortmund}?};
  \node (C4) at (7,0) {Is the team ranked 5th?};
  \node (C5) at (-3,-3) {Is the team ranked 5th?};
  \node (C6) at (3,-6) {Is the team ranked 6th?};

\tikzstyle{every node}=[align=center];
  \node (O1) at (-3,3) {\textcolor{blue}{Qualification (access list)}};
  \node (O2) at (3,-3) {\textcolor{blue}{Qualification (EPS)}};
  \node (O3) at (-3,-6) {\textcolor{blue}{Qualification (Titleholder)}};
  \node (O4) at (8,-6) {\textcolor{red}{Elimination}};
  
\tikzstyle{every node}=[align=center];  
  \draw [->,line width=1pt] (C1) -- (O1)  node [midway,above left] {Yes};
  \draw [->,line width=1pt] (C1) -- (C2)  node [midway,above right] {No};  
  \draw [->,line width=1pt] (C2) -- (C3)  node [midway,above left] {Yes};
  \draw [->,line width=1pt] (C2) -- (C4)  node [midway,above right] {No};
  \draw [->,line width=1pt] (C3) -- (C5)  node [midway,left] {Yes};
  \draw [->,line width=1pt] (C4) -- (O2)  node [midway,above left] {Yes};
  \draw [->,line width=1pt] (C4) -- (O4)  node [midway,above right] {No};
  \draw [->,line width=1pt] (C5) -- (C6)  node [midway,above right] {No};
  \draw [->,line width=1pt] (C5) -- (O3)  node [midway,left] {Yes};
  \draw [->,line width=1pt] (C6) -- (O2)  node [midway,left] {Yes};
  \draw [->,line width=1pt] (C6) -- (O4)  node [midway,above] {No};
\end{tikzpicture}

\captionsetup{justification=centering}
\caption{Qualification for the 2024/25 UEFA Champions League \\ from the 2023/24 German Bundesliga}
\label{Fig1}
\end{figure}

%\end{document}

Figure~\ref{Fig1} provides a graphical illustration of how the 2024/25 Champions League spots are allocated for the German teams. Obviously, the only way for the sixth-ranked team to qualify is if Borussia Dortmund is ranked fifth and wins the 2023/24 UEFA Champions League final simultaneously.

Note that the main problem resides in the strange interaction between \citet[Article~3.04]{UEFA2024b} and \citet[Article~3.07]{UEFA2024b}. In particular, the order in which they are applied depends on the ranking in the 2023/24 German Bundesliga that can be exploited by a tanking strategy outlined in \citet{DagaevSonin2018}: in a round-robin tournament, a situation always exists in which a deliberate loss does not affect the rank of a particular team (in our case, Eintracht Frankfurt) but changes the order of two higher-ranked teams (in our case, RB Leipzig and Borussia Dortmund).

The above pathological situation has not been realised as Borussia Dortmund has been guaranteed to finish in the fifth place before the last round. Nonetheless, UEFA has clearly had good luck in avoiding a strange game in the 2023/24 German Bundesliga.
Some football fans have also noticed the problem of misaligned incentives for Eintracht Frankfurt, see \url{https://m.sports.ru/tribuna/blogs/gazzzzzpacho/3239387.html}.

\section{A mathematical model} \label{Sec4}

This section presents a formal analysis of the problem.
However, there is no need to understand the mathematical framework for practitioners in sports organizations or policymakers who are mainly interested in the implications since all these issues will be thoroughly discussed in Section~\ref{Sec5}.

A \emph{tournament} $(X,Y)$ is a tuple of a \emph{set of results} $X$ and a \emph{ranking method} $Y$ that assigns a linear order of the teams to any set of results $X$. The number of teams in tournament $(X,Y)$ is denoted by $\lvert X \rvert$. For each team $1 \leq i \leq \lvert X \rvert$, $Y_i(X)$ is its rank under a given set of results $X$.

Let $\mathbf{W} = \left[ (S,P), (T,Q), (U,R) \right]$ be a \emph{tournament system} consisting of three tournaments $(S,P)$, $(T,Q)$, $(U,R)$.
The winner of tournament $(S,P)$ is denoted by $p$, that is, $1 \leq p \leq \lvert S \rvert$ is the (unique) team for which $P_p(S) = 1$ holds. This team $p$ also plays in tournament $(T,Q)$ but not in tournament $(U,R)$.
The winner of tournament $(U,R)$ is denoted by $r$: $1 \leq r \leq \lvert U \rvert$ is the (unique) team for which $R_r(U) = 1$ holds.

This model can represent the qualification of the UEFA Champions League as follows. $(S,P)$ is the Champions League. $(T,Q)$ is the domestic league of the Champions League winner, denoted by $p$. $(U,R)$ is the ``competition'' played by all teams qualified for the champions path in the qualification of the Champions League such that the ranking method $R$ is based on the club coefficients of these teams.

The \emph{allocation rule} $\mathcal{A} (\mathbf{W})$ chooses the winner of tournament $(S,P)$ and the top $q$ teams of tournament $(T,Q)$.
However, a \emph{vacancy} may be created if team $p$ is ranked among the top $q$ teams in tournament $(T,Q)$.

We consider three allocation rules that differ in the assignment of the vacant slot caused by team $p$.
As discussed in Section~\ref{Sec3}, the current rule of the UEFA Champions League is:
\[
\mathcal{A} (\mathbf{W}) = 
\begin{cases}
    i: i = p \lor Q_i(T) \leq q & \text{if } Q_p(T) \geq q+1 \\
    i: i = r \lor Q_i(T) \leq q & \text{if } Q_p(T) \leq q-1 \\
    i: Q_i(T) \leq q+1      & \text{if } Q_p(T) = q.
\end{cases}
\]
Note that the winner $r$ of tournament $(U,R)$ receives the vacant slot only if the titleholder team $p$ is ranked among the top $q-1$ teams in tournament $(T,Q)$ because the last, $q$th slot of tournament $(T,Q)$ is the European performance spot.

The allocation rule can be simplified if the order of Articles~3.04 and 3.07 in the UEFA Champions League regulation \citep{UEFA2024b} is reversed. Then the two European performance spots are allocated first \citep[Article~3.07]{UEFA2024b}, followed by addressing the possible vacancies created by the titleholders \citep[Article~3.04]{UEFA2024b}, which leads to:
\[
\mathcal{A}' (\mathbf{W}) = 
\begin{cases}
    i: i = p \lor Q_i(T) \leq q & \text{if } Q_p(T) \geq q+1 \\
    i: i = r \lor Q_i(T) \leq q & \text{if } Q_p(T) \leq q.
\end{cases}
\]

Finally, any vacancy created by a titleholder can be filled by the highest-ranked not already qualified club from its association:
\[
\mathcal{A}'' (\mathbf{W}) = 
\begin{cases}
    i: i = p \lor Q_i(T) \leq q & \text{if } Q_p(T) \geq q+1 \\
    i: Q_i(T) \leq q+1      & \text{if } Q_p(T) \leq q.
\end{cases}
\]

Now the incentive (in)compatibility of these allocation rules will be explored.

\begin{definition} \label{Def1}
Tournament $(S,P)$ is called \emph{manipulable} by team $1 \leq i \leq \lvert S \rvert$ if there exists a tournament $\left( S',P \right)$ such that $S' = S$ except for a worse performance by team $i$, and the following two conditions hold:
\begin{itemize}
\item
$P_i \left( S \right) = P_i \left( S' \right)$;
\item
$P_j \left( S \right) \geq 2$ but $P_j \left( S' \right) = 1$ for a team $1 \leq j \leq \lvert S \rvert$.
\end{itemize}
\end{definition}
That is, a tanking strategy by team $i$ guarantees the same final ranking for team $i$ and makes team $j$ the tournament winner.

\begin{definition} \label{Def2}
Tournament $(T,Q)$ is called \emph{manipulable} by team $1 \leq i \leq \lvert T \rvert$ if there exists a tournament $\left( T',Q \right)$ such that $T' = T$ except for a worse performance by team $i$, and the following two conditions hold:
\begin{itemize}
\item
$Q_i \left( T \right) = Q_i \left( T' \right) = q+1$;
\item
$Q_j \left( T \right) = Q_k \left( T' \right) = q-1$, as well as $Q_j \left( T' \right) = Q_k \left( T \right) = q$ for some teams $1 \leq j,k \leq \lvert T \rvert$.
\end{itemize}
\end{definition}
That is, a tanking strategy by team $i$ guarantees that it remains ranked $(q+1)$th, while reverses teams $j$ and $k$ in the two positions directly above ($(q-1)$th and $q$th).

In tournament $(U,R)$, no tanking strategy exists because the UEFA club coefficient cannot be increased by a worse performance on the field.

\begin{proposition} \label{Prop1}
Tournament system $\mathbf{W} = \left[ (S,P), (T,Q), (U,R) \right]$ with the allocation rule $\mathcal{A} (\mathbf{W})$ is incentive incompatible if and only if one of the following conditions hold:
\begin{itemize}
\item
tournament $(S,P)$ is manipulable by the team $1 \leq r \leq \lvert U \rvert$ (for which $R_r(U) = 1$) such that the team $j$ in Definition~\ref{Def1} plays in both tournaments $(S,P)$ and $(T,Q)$ with $Q_j(T) \leq q-1$; or
\item
tournament $(S,P)$ is manipulable by the team $1 \leq i \leq \lvert T \rvert$ for which $Q_i(T) = q+1$ such that the team $j$ in Definition~\ref{Def1} plays in both tournaments $(S,P)$ and $(T,Q)$ with $Q_j(T) = q$; or
\item
tournament $(T,Q)$ is manipulable by the team $1 \leq i \leq \lvert T \rvert$ for which $Q_i(T) = q+1$ such that the team $j$ in Definition~\ref{Def2} plays in both tournaments $(S,P)$ and $(T,Q)$ with $P_j(S) = 1$, that is, $j=p$.
\end{itemize}
\end{proposition}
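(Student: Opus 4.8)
The plan is to prove the equivalence by first reducing the whole question to a single description of who qualifies under $\mathcal{A}(\mathbf{W})$, and then enumerating exhaustively the two admissible tanking operations (Definition~\ref{Def1} in $(S,P)$ and Definition~\ref{Def2} in $(T,Q)$), matching each profitable one to a clause. Since the statement is an ``if and only if'', sufficiency is handled by running the three constructions forward, and necessity by showing these are the only strictly beneficial moves.

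First I would establish a \emph{qualification lemma} recording, as a function of the single quantity $Q_p(T)$ (the domestic rank of the Champions League winner) and of a team's own type, whether that team is selected. Reading off the three cases of $\mathcal{A}$ gives: team $p$ and every team with $Q_i(T) \leq q$ qualify regardless of $Q_p(T)$; a team with $Q_i(T) \geq q+2$ distinct from $p$ and $r$ never qualifies; the team with $Q_i(T) = q+1$ qualifies exactly when $Q_p(T) = q$ (the third case) or it coincides with $p$; and $r$ qualifies exactly when $Q_p(T) \leq q-1$ (the second case). The decisive consequence is that only two teams have a status that can be flipped from ``out'' to ``in'': the team ranked $q+1$ in $(T,Q)$, which profits precisely by forcing $Q_p(T)=q$, and the champions-path winner $r$, which profits precisely by forcing $Q_p(T) \leq q-1$. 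Because a worse performance can neither raise a team's own rank nor turn it into the winner $p$ of $(S,P)$ or the winner $r$ of $(U,R)$, no other team can ever gain; this is where I would invoke the stated non-manipulability of $(U,R)$.

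Next I would classify the admissible manipulations. A Definition~\ref{Def1} move leaves the manipulator's own rank fixed and only replaces the winner $p$ by some $j$, hence merely changes the controlling quantity from $Q_p(T)$ to $Q_j(T)$; note that $r$, not playing in $(T,Q)$, can act \emph{only} through such a move, while the rank-$(q+1)$ team may act in either tournament. If the beneficiary is $r$, profit forces $Q_j(T) \leq q-1$ (and a pre-state not already in the second case): the first clause. If the beneficiary is the team with $Q_i(T)=q+1$, profit forces $Q_j(T)=q$: the second clause, where $i \neq p$ is automatic since a manipulator keeping its rank cannot be the displaced winner. A Definition~\ref{Def2} move fixes the manipulator at rank $q+1$ and merely transposes the occupants of ranks $q-1$ and $q$; the sets ``top $q$'' and ``top $q+1$'' are preserved, so the allocation changes only when $p$ is one of the two transposed teams, and the sole profitable instance sends $p$ from rank $q-1$ (second case, manipulator out) to rank $q$ (third case, manipulator in), i.e.\ $p=j$ in Definition~\ref{Def2}: the third clause; the reverse transposition only harms the manipulator and is discarded. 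For sufficiency I would run each construction forward and verify, via the case distinction of $\mathcal{A}$, the strict jump from non-qualified to qualified.

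The main obstacle is the completeness half of necessity: showing these three are the \emph{only} strictly beneficial moves. This rests on the qualification lemma restricting the possible beneficiaries to $r$ and the rank-$(q+1)$ team, together with the observation that each allowed move shifts $Q_p(T)$ only in a very limited way (a Definition~\ref{Def1} move replaces it by $Q_j(T)$; a Definition~\ref{Def2} move can only move $p$ between the adjacent ranks $q-1$ and $q$). The remaining delicacy is bookkeeping the pre-manipulation case, so that the outcome is a strict gain rather than a lateral or self-harming change; concretely, each clause must be read as also fixing that the initial state leaves the beneficiary unqualified (the winner's domestic rank not already in the favourable case), which is exactly what the forward constructions supply.
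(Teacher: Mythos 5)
Your proposal is correct and follows essentially the same route as the paper's proof: sufficiency by running the three constructions forward through the case distinction of $\mathcal{A}$, and necessity by observing that only the winner $r$ of $(U,R)$ and the team ranked $(q+1)$th in $(T,Q)$ can ever flip from unqualified to qualified. Your ``qualification lemma'' and the bookkeeping of the pre-manipulation state simply make explicit what the paper compresses into its final sentence (``no other team can receive the possible vacant slot''), so yours is a more detailed rendering of the same argument rather than a different one.
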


\begin{proof}
If the first condition is satisfied, team $r$ can create a vacancy by making team $j$ the winner of tournament $(S,P)$, which is filled by team $r$ since $Q_j(T) \leq q-1$.

If the second condition is satisfied, team $i$ can create a vacancy by making team $j$ the winner of tournament $(S,P)$, which is filled by team $i$ since $Q_j(T) = q$.

If the third condition is satisfied, team $i$ can create a vacancy by ensuring that the winner of tournament $(S,P)$ is ranked $q$th in tournament $(T,Q)$, which is filled by team $i$.

The definition of the allocation rule $\mathcal{A}$ implies that no other team can receive the possible vacant slot, thus, incentive incompatibility arises only in the cases above.
\end{proof}

The first condition becomes relevant if a club (team $r$) that hopes to have the highest individual club coefficient in the champions path in the next season supports a club (team $j$) from a higher-ranked national association to win the UEFA Champions League in order to create a vacancy to be filled by itself. This is much less threatening than the third condition because tanking is excluded in the final knockout stage and the titleholder usually qualifies via its domestic championship, too. For example, the Ukrainian Shakhtar Donetsk (that had the highest coefficient in the champions path as we have seen in Section~\ref{Sec3}) played in a group with Antwerp, Barcelona, and Porto in the 2023/24 UEFA Champions League. Therefore, it would be interested in supporting Barcelona to win the Champions League at the expense of Antwerp since it was more likely that Barcelona would also qualify through its domestic championship.

The second condition says that a club (team $i$) that is ranked $(q+1)$th in its domestic league supports the club (team $j$) ranked directly above it from the same league to win the UEFA Champions League in order to create a vacancy to be filled by itself. However, this cannot be beneficial because (1) tanking is excluded in the final knockout stage and (2) teams from the same national association are not allowed to play against each other in the previous (group or league) stage. The latter observation highlights the role of draw constraints with respect to incentive compatibility, which is extensively studied by \citet{Csato2022a}.
Thus, neither hypothetical nor historical cases could be constructed to bridge the gap between theory and practice.

The example presented in Section~\ref{Sec3} exploits the third condition: $q=5$ implies the sixth-ranked Eintracht Frankfurt (team $i$) can benefit from swapping the fourth-ranked Borussia Dortmund (team $j$) and the fifth-ranked RB Leipzig (team $k$) if Dortmund plays in the Champions League and wins this competition.

\begin{proposition} \label{Prop2}
Tournament system $\mathbf{W} = \left[ (S,P), (T,Q), (U,R) \right]$ with the allocation rule $\mathcal{A}' (\mathbf{W})$ is incentive incompatible if and only if tournament $(S,P)$ is manipulable by the team $1 \leq r \leq \lvert U \rvert$ (for which $R_r(U) = 1$) such that the team $j$ in Definition~\ref{Def1} plays in both tournaments $(S,P)$ and $(T,Q)$ with $Q_j(T) \leq q$.
\end{proposition}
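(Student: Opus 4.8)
The plan is to prove both implications by exhaustively analysing the only two kinds of tank the system admits---a manipulation of $(S,P)$ in the sense of Definition~\ref{Def1} and a manipulation of $(T,Q)$ in the sense of Definition~\ref{Def2}---and checking, for each, whether the manipulating team can move from excluded to selected under $\mathcal{A}'$. The decisive structural fact, which I would isolate at the outset, is that $\mathcal{A}'$ has no analogue of the special case $Q_p(T)=q$ that $\mathcal{A}$ carries: whenever the winner of $(S,P)$ lands in the top $q$ of $(T,Q)$, the vacated slot is \emph{always} awarded to $r$ and never to the $(q+1)$th-ranked team of $(T,Q)$. Everything else follows from this observation.

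For sufficiency I would argue as in the first bullet of Proposition~\ref{Prop1}. If $(S,P)$ is manipulable by $r$ with the resulting winner $j$ satisfying $Q_j(T)\leq q$, then after $r$ tanks, $j$ wins $(S,P)$ and simultaneously occupies a top-$q$ place of $(T,Q)$, so the second branch of $\mathcal{A}'$ applies and the selected set is $\{i: i=r \lor Q_i(T)\leq q\}$; the slot freed by $j$ is thus filled by $r$. In a configuration where the honest winner of $(S,P)$ and $r$ itself both lie outside the top $q$ of $(T,Q)$, team $r$ is excluded before the tank and selected after it, which is exactly incentive incompatibility. This single condition now subsumes both $Q_j(T)\leq q-1$ and $Q_j(T)=q$, absorbing the two separate $(S,P)$-clauses of Proposition~\ref{Prop1}.

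For necessity I would rule out every other tank. A manipulation of $(T,Q)$ as in Definition~\ref{Def2} merely interchanges the teams in positions $q-1$ and $q$ and leaves $(S,P)$ untouched; hence the \emph{set} of top $q$ teams of $(T,Q)$ is unchanged, and the winner $p$ of $(S,P)$ stays in the top $q$ if and only if it did before. Since both branches of $\mathcal{A}'$ read $(T,Q)$ only through this set and through the single predicate $Q_p(T)\leq q$, the selected set is identical before and after the swap, so no team---least of all the manipulator sitting at position $q+1$---can gain; this is where the reversed article order does its work and eliminates the third clause of Proposition~\ref{Prop1}. For a manipulation of $(S,P)$ by some team $i$ installing $j$ as winner, the top $q$ of $(T,Q)$ is again fixed, so the only slot that can change is the ``winner-or-$r$'' slot: if $Q_j(T)\geq q+1$ it becomes $j$, who cannot be the manipulator since Definition~\ref{Def1} forces $i\neq j$; if $Q_j(T)\leq q$ it becomes $r$. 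The manipulator therefore profits only when $i=r$ and $Q_j(T)\leq q$, which is the asserted condition, and the $(q+1)$th-ranked team can never benefit because $\mathcal{A}'$ sends the vacancy to $r$ rather than to it---this is what removes the second clause of Proposition~\ref{Prop1}.

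The main obstacle is the necessity direction, and within it the neutralisation of the $(T,Q)$-manipulation: one must check that the predicate $Q_p(T)\leq q$ is genuinely invariant under the Definition~\ref{Def2} swap even in the borderline case $p\in\{j,k\}$, where $p$ migrates between positions $q-1$ and $q$ yet remains inside the top $q$ in both states. Once this invariance is in hand, the remaining case analysis is routine bookkeeping, and a direct comparison with Proposition~\ref{Prop1} confirms the expected pattern---two of its three clauses collapse, and the surviving one widens from $Q_j(T)\leq q-1$ to $Q_j(T)\leq q$.
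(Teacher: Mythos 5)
Your proposal is correct and rests on exactly the same observation as the paper's own (very terse) proof: the rule $\mathcal{A}'$ treats all cases $Q_p(T)\leq q$ uniformly, having no special branch at $Q_p(T)=q$, which kills the second and third clauses of Proposition~\ref{Prop1} and widens the surviving clause to $Q_j(T)\leq q$. The paper simply cites the proof of Proposition~\ref{Prop1} and states this elimination in one line, whereas you spell out the case analysis (including the borderline check $p\in\{j,k\}$ under the Definition~\ref{Def2} swap) explicitly.
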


\begin{proof}
It follows from the proof of Proposition~\ref{Prop1}.
The second and third conditions in Proposition~\ref{Prop1} are eliminated because all cases $Q_p(T) \leq q$ are treated uniformly by the allocation rule $\mathcal{A}'$.
\end{proof}

\begin{proposition} \label{Prop3}
Tournament system $\mathbf{W} = \left[ (S,P), (T,Q), (U,R) \right]$ with the allocation rule $\mathcal{A}'' (\mathbf{W})$ is incentive incompatible if and only if tournament $(S,P)$ is manipulable by the team $1 \leq i \leq \lvert T \rvert$ for which $Q_i(T) = q+1$ such that the team $j$ in Definition~\ref{Def1} plays in both tournaments $(S,P)$ and $(T,Q)$ with $Q_j(T) = q$.
\end{proposition}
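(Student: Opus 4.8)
The plan is to deduce Proposition~\ref{Prop3} from the proof of Proposition~\ref{Prop1}, mirroring how Proposition~\ref{Prop2} was obtained, since $\mathcal{A}$, $\mathcal{A}'$, and $\mathcal{A}''$ operate within one framework and differ only in who inherits the vacancy left by the titleholder $p$. Proposition~\ref{Prop1} already lists every manipulation that can create incentive incompatibility here and, through its closing sentence, certifies that no team outside those three scenarios can ever seize the vacant slot. The whole task thus reduces to checking which of the three conditions survive once every vacancy is handed to the highest-ranked not-yet-qualified domestic club, as $\mathcal{A}''$ prescribes.

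For sufficiency I would simply re-run the construction behind the second condition of Proposition~\ref{Prop1}, which is word-for-word the condition claimed here: a team $i$ with $Q_i(T) = q+1$ tanks in $(S,P)$ so that a club $j$ with $Q_j(T) = q$ becomes the titleholder. Once $p = j$, we have $Q_p(T) = q$, and the second branch of $\mathcal{A}''$ admits precisely the top $q+1$ teams of $(T,Q)$, so team $i$ now qualifies; before the deviation, when the titleholder lay outside the top $q$ of $(T,Q)$, team $i$ was excluded. As the manipulation keeps $P_i(S)$ fixed, team $i$ strictly gains, which is the asserted incentive incompatibility.

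For necessity the decisive structural fact is that $\mathcal{A}''$ treats every case $Q_p(T) \leq q$ identically, always sending the vacancy to the $(q+1)$-th domestic club and never to the winner $r$ of $(U,R)$. This annihilates the first condition of Proposition~\ref{Prop1}: since $r$ can no longer inherit a vacancy for any domestic position of $j$, no deviation by $r$ is ever profitable. It likewise annihilates the third condition, whose manipulation of $(T,Q)$ in Definition~\ref{Def2} merely moves the titleholder $p$ between domestic ranks $q-1$ and $q$; both lie in the single regime $Q_p(T) \leq q$, so the qualifying set remains the top $q+1$ teams and team $i$ gains nothing. Only the second condition persists, and Proposition~\ref{Prop1} forbids any benefit outside these scenarios, so the equivalence follows.

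The main obstacle, as in Proposition~\ref{Prop2}, is to keep this reduction airtight rather than to grind through computation: I must confirm that discarding the first and third conditions opens no fresh manipulation peculiar to $\mathcal{A}''$, and that in the surviving case team $i$ truly improves, i.e.\ fails to qualify before it tanks. Both points are already guaranteed by the exhaustive branch-by-branch analysis supporting Proposition~\ref{Prop1}, so what remains is careful bookkeeping across the two branches of $\mathcal{A}''$ rather than any new argument.
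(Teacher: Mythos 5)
Your proposal is correct and follows essentially the same route as the paper: Proposition~\ref{Prop3} is deduced from the case analysis in the proof of Proposition~\ref{Prop1}, with the first and third conditions eliminated precisely because $\mathcal{A}''$ treats all cases $Q_p(T) \leq q$ uniformly (awarding the slot to the $(q+1)$th domestic club, never to $r$), leaving only the second condition. The paper states this more tersely, but your added verification of sufficiency and of why conditions one and three die is exactly the bookkeeping the paper's one-line proof implicitly relies on.
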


\begin{proof}
It follows from the proof of Proposition~\ref{Prop1}.
The first and third conditions in Proposition~\ref{Prop1} are eliminated because all cases $Q_p(T) \leq q$ are treated uniformly by the allocation rule $\mathcal{A}''$.
\end{proof}

The results above cannot be derived from the model of \citet{DagaevSonin2018} since the latter contains multiple round-robin and knockout tournaments.
However, if $(S,P)$ is a knockout tournament, then it cannot be manipulated by any team $1 \leq i \leq \lvert S \rvert$, and both allocation rules $\mathcal{A}'$ and $\mathcal{A}''$ guarantee incentive compatibility.

\section{Discussion} \label{Sec5}

Recent papers on fairness, fraud, and incentives in sport \citep{ScellesFrancoisValenti2024, VanwerschWillemConstandtHardyns2022} call for future research to accumulate more knowledge on this topic. Our paper examines the new format of the UEFA Champions League from the perspective of incentive compatibility, that is, whether the rules always provide the right incentives for the teams to perform.
The rules have clearly improved from the 2024/25 season. Now the clubs are seeded based on their individual club coefficients except for the guaranteed place of the titleholder in the first pot \citep[Article~16.01]{UEFA2024b}. Therefore, the regime used until the 2014/15 season has returned, and the seeding system satisfies incentive compatibility.

On the other hand, strategy-proofness is violated in the new format according to Section~\ref{Sec3}. The underlying reason is highlighted by the analysis of the mathematical model in Section~\ref{Sec4}, in particular, by Proposition~\ref{Prop1}: the European performance spots are allocated only \emph{after} the vacancies caused by the two titleholders are solved \citep[Article~3.07]{UEFA2024b}. Thus, a team playing in the same round-robin tournament (domestic league) as one of the (potential) titleholders is interested in \emph{not} creating vacancies in the UEFA Champions League. This can be achieved by ``helping'' other teams to be ranked over the (potential) titleholder. Such misaligned incentives may arise in the associations entitled to at least one spot in the league phase of the UEFA Champions League if they also earn one of the European performance spots. These are currently the 10 highest-ranked associations: in the 2024/25 season, England, Spain, Germany, Italy, France, the Netherlands, Portugal, Belgium, Scotland, and Austria.

Based on Section~\ref{Sec4}, incentive incompatibility is mitigated by changing the allocation rule according to the definition of either $\mathcal{A}'$ or $\mathcal{A}''$:
\begin{itemize}
\item
\emph{Rule change A}: Reverse the order of applying Articles~3.04 and 3.07 in the UEFA Champions League regulation \citep{UEFA2024b}. That is, first the two European performance spots are allocated \citep[Article~3.07]{UEFA2024b}, and then the possible vacancies created by the titleholders are addressed \citep[Article~3.04]{UEFA2024b}.
\item
\emph{Rule change B}: Modify the way of filling vacancies in \citet[Article~3.04]{UEFA2024b}. That is, any vacancy created by a titleholder is filled by the highest-ranked not already qualified club from its association.
\end{itemize}

If rule change A is chosen, Proposition~\ref{Prop2} remains valid. Hence, a tanking strategy can exist in the league phase of the UEFA Champions League for a team that qualifies for the champions path in the qualification of the next Champions League. For example, any team that is already eliminated in the league phase but can have the highest UEFA club coefficient in the champions path is interested in creating a vacant slot. This might be achieved by losing against a team that is expected to obtain a direct qualifying slot in its domestic league in order to help it win the Champions League. Consider the Swiss champion Young Boys in the 2024/25 UEFA Champions League. The club was eliminated in the league phase after six rounds and played against Celtic (Scotland) and Crvena Zvezda (Serbia) in the last two rounds. Since the Scottish champion directly qualifies for the league phase in contrast to the Serbian champion, Young Boys could have maximised its chances by losing against Celtic and exerting full effort against Crvena Zvezda.

On the other hand, if rule change B is chosen, Proposition~\ref{Prop3} should be considered. Hence, a tanking strategy can exist in the league phase of the UEFA Champions League for a team if the Champions League is won by a team from the same national association due to the tanking. However, this is impossible because two teams from the same country are not allowed to play against each other in the league phase \citep[Article~16.02]{UEFA2024b}, and a deliberate loss in the knockout stage cannot be beneficial.

Consequently, even though rule change A seems to be closer to the original intention of UEFA, it might still lead to incentive incompatibility, albeit with a lower probability. At least, the situation presented in Section~\ref{Sec3} will certainly be avoided.
In contrast, rule change B entirely avoids misaligned incentives and highlights the role of the association constraint in the league phase draw, which is desirable not only to maximise the number of international matches played but also to ensure strategy-proofness in the league phase. This is a good illustration of the idea raised by \citet{Csato2022a} that draw restrictions can be used to prohibit potentially unfair matches.
Nonetheless, rule change B favours the domestic league of the titleholder, hence, it leads to a somewhat higher concentration of the clubs playing in the UEFA Champions League by their associations because the highest-ranked countries are more likely to win both the UEFA Champions League and the UEFA Europa League.

To summarise, the main message of \citet{DagaevSonin2018} also applies to the model of Section~\ref{Sec4}: the vacant slots should be filled via the round-robin tournament as done by the allocation rule $\mathcal{A}''$. Therefore, UEFA could have guaranteed incentive compatibility by following the recommendation of \citet{DagaevSonin2018} despite the different settings.

In the example of Section~\ref{Sec3}, Rule change A means that the top five German teams enter the UEFA Champions League due to the European performance spot, and Borussia Dortmund would create a vacancy if it wins the 2023/24 UEFA Champions League, to be filled by the Ukrainian champion Shakhtar Donetsk.
On the other hand, Rule change B implies that the top five German teams enter the UEFA Champions League due to the European performance spot, and Borussia Dortmund would create a vacancy if it wins the 2023/24 UEFA Champions League, to be filled by the sixth-placed Eintracht Frankfurt.

If UEFA is not worried much about manipulation in a domestic league and is not willing to change its regulations, the national associations are also able to mitigate the problem to some extent. The current paper has considered incentive incompatibility as a binary concept. However, the probability of a situation that is vulnerable to tanking can be quantified via Monte Carlo situations, analogously to \citet{Csato2022a}, even if it would be difficult to implement.

Naturally, the set of matches to be played cannot be changed in a domestic championship without changing the tournament format, but the order of the matches is a decision variable. The scenario outlined in Section~\ref{Sec3} has become threatening especially since the match Eintracht Frankfurt vs.\ RB Leipzig was played in the last round. The set of top teams may be relatively accurately estimated at the beginning of the season, thus, the last rounds can be scheduled to avoid any matches between the strongest teams. Such a schedule might lead to a tanking opportunity at the end of the season with a (much) lower probability. But even if Eintracht Frankfurt had already played against RB Leipzig, it would be unfair if it turns out \emph{ex post} that a team has missed qualifying for the UEFA Champions League due to winning a match months before.

Furthermore, if strong teams do not play against each other in the last rounds, some strong teams may play against an opponent with nothing to compete for anymore, while other strong teams may play against opponents still in contention for spots in UEFA club competitions or to avoid relegation. The former matches can jeopardise fairness as discussed by \citet{CsatoMolontayPinter2024} and \citet{ScellesFrancoisValenti2024}.
Therefore, scheduling the domestic leagues accordingly is not a good solution for the problem identified here.
Consequently, it is the responsibility of UEFA to act if it is really committed to its announced principles.

\section{Conclusions and managerial implications} \label{Sec6}

This paper has revealed that the qualification system of the UEFA Champions League is incentive incompatible from the 2024/25 season. In particular, a game with misaligned incentives was narrowly avoided in the 2023/24 German Bundesliga.
We have used a formal mathematical model to uncover the underlying causes and to provide two alternative policies.
%(1) Reverse the order in which the vacancies caused by the titleholders are filled and the European performance sports are allocated; or
%(2) Fill any vacancy caused by a titleholder within its domestic league.
%While the first solution substantially reduces the probability of an incentive incompatible situation, the second fully eliminates the problem if the current draw restrictions are retained.
UEFA is strongly encouraged to adopt one of our recommendations in order to safeguard the integrity of its flagship tournament.

\begin{table}[t!]
\caption{Timeline of incentive incompatibility in UEFA club competitions \\ caused by the allocation of vacant slots}
\label{Table2}
\rowcolors{3}{}{gray!20}
\begin{tabularx}{\textwidth}{lL} \toprule
    Date/Period & Problem \\ \bottomrule
    2009/10--2014/15 & Incentive incompatibility of the UEFA Europa League entry \citep{DagaevSonin2018} \\
    May 2012 & The match Feyenoord vs.\ Heerenveen, where Heerenveen was better off by losing than by playing a draw \citep[Chapter~2.1]{Csato2021a} \\
    March 2013 & The first version of \citet{DagaevSonin2018} becomes available that verifies the incentive incompatibility of the UEFA Europa League entry and offers a general solution to the problem \\
    2015/16--2017/18 & Incentive incompatibility of the UEFA Champions League entry \citep{Csato2019d} \\
    2015/16--2023/24 & Incentive incompatibility of the UEFA Champions League seeding system \citep{Csato2020a} \\
    2024/25-- & Incentive incompatibility of the UEFA Champions League entry (this paper) \\ \bottomrule
	\end{tabularx}
\end{table}

The problem identified is especially worrying because it is rooted in an issue (the allocation of vacant slots) that was repeatedly responsible for incentive incompatibility in UEFA club competitions as summarised in Table~\ref{Table2}.
This offers an instructive lesson for decision-makers around the world: commitment to sporting merit requires not only media releases but also appropriately designed tournaments. Since rule changes can always affect fairness, a thorough assessment is necessary before they are implemented.

In order to achieve that fundamental aim, \citet{KendallLenten2017} suggest two alternative calls to action:
\begin{itemize}
\item
The governing bodies of major sports should invite academic representatives onto their committees, who would be tasked with identifying possible loopholes in proposed rule changes in consultation with the scientific community; or
\item
All proposed rule changes need to be posted in a public forum, where interested academics could comment, perhaps after running simulations.
\end{itemize}
Although UEFA works with academics both to improve strategic decision-making (UEFA Research Grant Programme, \url{https://uefaacademy.com/courses/rgp/}) and to support medical and anti-doping decision-making (UEFA Medical and Anti-Doping Research Grant Programme, \url{https://uefaacademy.com/courses/mrgp/}), these initiatives seem to be inadequate to address fairness issues in UEFA club competitions as illustrated by the examples in Table~\ref{Table2}. Hence, a potential solution can be to invite members of the academic community to study every planned format and rule changes in European football.

\section*{Acknowledgements}
\addcontentsline{toc}{section}{Acknowledgements}
\noindent
We are grateful to \emph{Dries Goossens} and \emph{David van Bulck}, who have organised the third \href{https://robinxval.ugent.be/FairnessInSports/}{Fairness in Sports workshop} in Ghent on 11 June 2024. \\
\emph{Phil Scarf}, two reviewers, and one anonymous colleague provided valuable comments and suggestions on earlier drafts. \\
We are indebted to the \href{https://en.wikipedia.org/wiki/Wikipedia_community}{Wikipedia community} for summarising important details of the sports competitions discussed in the paper. \\
The research was supported by the National Research, Development and Innovation Office under Grant FK 145838, and the J\'anos Bolyai Research Scholarship of the Hungarian Academy of Sciences.

\bibliographystyle{apalike}
\bibliography{All_references}

\end{document}